\documentclass[journal,draftcls,onecolumn,11pt]{IEEEtran}
\normalsize

\usepackage{amsmath,amssymb,amsthm}
\usepackage{enumerate}
\usepackage{stfloats}
\usepackage{comment}
\usepackage{subcaption}
\usepackage{siunitx}
\usepackage{cite}

\usepackage[dvipsnames]{xcolor}
\definecolor{myPink}{RGB}{255,105,183}

\usepackage[T1]{fontenc}
\usepackage{graphics} 
\usepackage{epsfig} 
\usepackage[mathscr]{euscript}
\usepackage{algorithm}
\usepackage[noend]{algpseudocode}
\makeatletter
\def\BState{\State\hskip-\ALG@thistlm}
\makeatother

\usepackage{soul}
\usepackage{tikz}
\usetikzlibrary{arrows,shapes,chains,matrix,positioning,scopes,patterns,calc,
decorations.markings,
decorations.pathmorphing,
}

\usepackage{pgfplots}
\pgfplotsset{compat=1.3}
\usepgflibrary{shapes}

\newtheorem{theorem}{Theorem}
\newtheorem{lemma}[theorem]{Lemma}

\newtheorem{definition}[theorem]{Definition}
\newtheorem{example}[theorem]{Example}

\renewcommand{\epsilon}{\varepsilon}

\newcommand{\mb}[1]{\mathbf{#1}}

\newcommand{\RNum}[1]{\uppercase\expandafter{\romannumeral #1\relax}}

\def\Pr{\mathrm{Pr}}

\DeclareMathAlphabet{\mcl}{OMS}{cmsy}{m}{n}

\newlength\tikzwidth
\newlength\tikzheight

\textfloatsep=0.05in

\definecolor{mycolor1}{rgb}{0.63529,0.07843,0.18431}%
\definecolor{mycolor2}{rgb}{0.00000,0.44706,0.74118}%
\definecolor{mycolor3}{rgb}{0.00000,0.49804,0.00000}%
\definecolor{mycolor4}{rgb}{0.87059,0.49020,0.00000}%
\definecolor{mycolor5}{rgb}{0.00000,0.44700,0.74100}%
\definecolor{mycolor6}{rgb}{0.74902,0.00000,0.74902}%


\usepackage{amsthm}
\usepackage{amsmath,amsfonts}
\usepackage{cite}
\usepackage{amssymb}
\usepackage{dsfont}
\usepackage{graphicx}
\usepackage{color}
\usepackage{mathtools}
\usepackage{bbm}
\usepackage{latexsym}

\newif\iflonger
\longertrue
\newif\ifhashcoll
\hashcollfalse
\newif\ifproof
\prooftrue

\pagenumbering{arabic}
\hyphenation{op-tical net-works semi-conduc-tor}
\usepackage{pgfplots}
\usepackage{tikz}
\def\fig_path{./Figures}

\usepackage{bbm}

\newcommand{\myfloor}[1]{\lfloor #1 \rfloor}
\newcommand{\myceil}[1]{\lceil #1 \rceil}

\begin{document}
\title {{Random Khatri-Rao-Product Codes for Numerically-Stable Distributed Matrix Multiplication}}

%
\author{\begin{tabular}{c} Adarsh M. Subramaniam,
Anoosheh Heidarzadeh,
Krishna R. Narayanan
\end{tabular} \\
Department of Electrical and Computer Engineering \\
Texas A\&M University
}
\maketitle

\begin{abstract}
We propose a class of codes called random Khatri-Rao-Product (RKRP) codes for distributed matrix multiplication in the presence of stragglers.
The main advantage of the proposed codes is that decoding of RKRP codes is highly numerically stable in comparison to decoding of Polynomial codes \cite{yu2018polycode} and decoding of the recently proposed OrthoPoly codes \cite{fahim2019}.
We show that RKRP codes are maximum distance separable with probability 1.
The communication cost and encoding complexity for RKRP codes are identical to that of OrthoPoly codes and Polynomial codes and
the average decoding complexity of RKRP codes is lower than that of OrthoPoly codes.
Numerical results show that the average relative $L_2$-norm of the reconstruction error for RKRP codes is substantially better than that of OrthoPoly codes.
\end{abstract}


\section{Introduction and Main Results}
We consider the problem of computing $\mb{A}^{\mathsf{T}}\mb{B}$ for two matrices $\mathbf{A} \in \mathbb{R}^{N_2 \times N_1}$ and $\mathbf{B} \in \mathbb{R}^{N_2 \times N_3}$ in a distributed fashion  using a coded matrix multiplication scheme with $N$ worker nodes \cite{yu2018polycode,yu2018straggler,dutta2018unified,unidecode,lagrange_coded,speeding_up,shortdot,coded_unified,highdim,stserv_mult,inv_pro}.
In \cite{yu2018polycode}, Yu, Maddah-Ali and Avestimehr proposed an elegant encoding scheme called Polynomial codes in which the matrices $\mb{A}^{\mathsf{T}}$ and $\mb{B}$ are each split into $m$ and $n$ sub-matrices, respectively, the sub-matrices are encoded using a polynomial code and the computations are distributed to $N$ worker nodes.
This scheme is shown to have optimal recovery threshold, i.e., the matrix product $\mb{A}^{\mathsf{T}}\mb{B}$ can be computed (recovered) using the results of computation from any subset of worker nodes of cardinality $K=mn$.
In the language of coding theory, Polynomial codes are generalized Reed-Solomon codes, their generator matrices have Vandermonde structures, and they are
maximum distance separable (MDS) codes.

One important drawback of Polynomial codes is that the process of recovering $\mb{A}^{\mathsf{T}}\mb{B}$ from the results of the worker nodes (the decoding process) involves explicitly or implicitly inverting a Vandermonde matrix, which is well known to be highly numerically unstable even for moderate values of $K = mn$.
Very recently, Fahim and Cadambe \cite{fahim2019} proposed a very interesting polynomial code called OrthoPoly code which uses an orthogonal polynomial basis resulting in a Chebyshev-Vandermonde structure for the generator matrix.
OrthoPoly codes are also MDS codes, i.e., have optimal recovery threshold; however, they afford better numerical stability than Polynomial codes.
In particular, when there are $S$ stragglers among $N$ nodes, i.e., $N = K+S$, the condition number of the matrix that needs to be inverted grows only polynomially in $N$.
However, the main drawback of OrthoPoly codes is that the condition number still grows exponentially in $S$ making it unsuitable even for moderately large values of $S$.

In this paper, we propose a coding scheme for the distributed matrix multiplication problem which we call Random Khatri-Rao-Product (RKRP) codes which exhibits substantially better numerical stability than Polynomial codes \cite{yu2018polycode} and OrthoPoly codes~\cite{fahim2019}. The proposed coding scheme is not based on polynomial interpolation; rather, it is designed in the spirit of random codes in information theory.

RKRP codes split both $\mb{A}^{\mathsf{T}}$ and $\mb{B}$ into sub-matrices and encode them by forming random linear combinations of the sub-matrices.
The proposed RKRP codes have several desirable features: (i) RKRP codes have the same thresholds, encoding complexity and communication cost as that of Polynomial codes and OrthoPoly codes; (ii) Decoding process of RKRP codes is substantially more numerically stable than that of Polynomial and OrthoPoly codes, and decoding can be implemented even for fairly large values of $K,S,N$ (e.g., $K=1000$ and any $S,N$); and (iii) decoding complexity of RKRP codes is lower than that of OrthoPoly codes.
To the best of our knowledge, the RKRP code construction and the analysis of their
MDS property are new.

We present two ensembles of generator matrices for RKRP codes called the non-systematic RKRP ensemble and the systematic RKRP ensemble. Codes from these ensembles will be referred to
as \emph{non-systematic RKRP codes} and \emph{systematic RKRP codes}\footnote{The terminology of associating the words systematic and non-systematic with the code, rather than with the encoder is not standard in coding theory. While it is possible to find a systematic encoder for a non-systematic RKRP code, the resulting code would not belong to the systematic RKRP ensemble and hence, should be treated as a non-systematic RKRP code.}, respectively.
Systematic RKRP codes have better average decoding complexity and better numerical stability.
Hence, systematic RKRP codes would be preferred over non-systematic RKRP codes for most applications.
However, we present both non-systematic and systematic ensembles in this paper for the following reasons.
Since Polynomial and OrthoPoly codes are presented with non-systematic encoding, non-systematic RKRP codes allow for a fair comparison with Polynomial and OrthoPoly codes.
The proofs are also easier to follow when presented for the non-systematic ensemble first and then extended to the systematic ensemble.
Finally, non-systematically RKRP codes provide privacy which systematic RKRP codes do not, although this issue is not studied further in this paper.

\section{Notation}
We use boldface capital letters for matrices and underlined variables to represent vectors. We denote the $i,j$th element of the matrix $\mathbf{A}$ by $[\mathbf{A}]_{i,j}$. The $i$th row and $i$th column of matrix $\mathbf{A}$ will be represented by $[\mathbf{A}]_{i,:}$ and $[\mathbf{A}]_{:,i}$, respectively.
If $\mathcal{S}_1 \subset \mathbb{Z}^+$ and $\mathcal{S}_2 \subset \mathbb{Z}^+$ are two subsets of positive integers, then the submatrix of $\mathbf{A}$ corresponding to the rows from $\mathcal{S}_1$ and columns from $\mathcal{S}_2$ is given by $[\mathbf{A}]_{\mathcal{S}_1,\mathcal{S}_2}$.
We denote the set of integers from $i$ to $j$, inclusive of $i$ and $j$ by $i:j$ and we denote the set of integers from 1 to $i$ by $[i]$.
For a vector $\underline{v}$, we denote the part of vector $\underline{v}$ between indices $i$ and $j$ as $\underline{v}_{i:j}$. We will assume that vectors without transposes are column vectors unless stated otherwise.
Random variables will be denoted by capital letters and their realizations will be denoted by lower case letters.

\section{System Model and Preliminaries} \label{sm&p}
We consider a system with one master node which has access to matrices $\mb{A}^{\mathsf{T}}$ and $\mathbf{B}$ and $N$ worker nodes which can perform multiplication of sub-matrices of $\mb{A}^{\mathsf{T}}$ and $\mathbf{B}$.
At the master node, the matrix $\mb{A}^{\mathsf{T}}$ is split into $m$ sub-matrices row-wise and $\mathbf{B}$ is split into $n$ sub-matrices column-wise as shown below.
\begin{equation}
    \mathbf{A}^{\mathsf{T}} = \begin{bmatrix}
\mathbf{A}_1^{\mathsf{T}} \\
\mathbf{A}_2^{\mathsf{T}}  \\
\vdots \\
\mathbf{A}_m^{\mathsf{T}}  \\
\end{bmatrix}, \quad
\mathbf{B} = \begin{bmatrix} \mathbf{B}_1 & \mathbf{B}_2 & \cdots & \mathbf{B}_n \end{bmatrix}
\end{equation}

In order to compute the matrix product $\mathbf{A}^{\mathsf{T}} \mathbf{B}$, we need to compute the matrix products
$\mathbf{A}_j^{\mathsf{T}} \mathbf{B}_l$ for $j=1,\ldots,m$ and $l=1,\ldots,n$.
The main idea in distributed coded computation is to first encode $\mathbf{A}_1^{\mathsf{T}}, \ldots, \mathbf{A}_m^{\mathsf{T}}$ and $\mathbf{B}_1, \ldots, \mathbf{B}_n$ into $N$ pairs of matrices $\big(\mathbf{U}_i^{\mathsf{T}},\mathbf{V}_i\big), i=1, \ldots, N$.\footnote{This is not the most general form of encoding but many of the existing encoding schemes in the literature as well as the proposed
scheme can be represented in this way.}
The $i$th worker node is then tasked with computing the matrix product $\mathbf{X}_i = \mathbf{U}_i^{\mathsf{T}} \mathbf{V}_i$.
It is assumed that $K$ out of the $N$ workers return the result of their computation; these worker nodes are called non-stragglers.
Without loss of generality we assume that the non-stragglers are worker nodes $1,\ldots, K$.

\begin{definition}
An \emph{encoding scheme} is a mapping from $(\mathbf{A}_1^{\mathsf{T}}, \ldots, \mathbf{A}_m^{\mathsf{T}}, \mathbf{B}_1, \ldots, \mathbf{B}_n)$ to
$\{ (\mb{X}_i = \mathbf{U}_i^{\mathsf{T}}, \mathbf{V}_i ) \}$ for $i=1,\ldots, N$. A \emph{codeword} is a vector of matrices $\underline{\mb{X}}=[\mb{X}_1, \mb{X}_2, \ldots, \mb{X}_N]$. A \emph{code} is the set of possible codewords $\{\underline{\mb{X}}\}$.
\end{definition}

\begin{definition}
An encoding scheme is said to result in a maximum distance separable (MDS) code,
or the corresponding code is said to be MDS, if the set of matrix products $\{\mb{A}_i^{\mathsf{T}}\mb{B}_j\}$ for $i=1,\ldots,m$ and $j=1,\ldots,n$ can be computed (recovered) from any subset of $\{\mathbf{X}_1, \mathbf{X}_2, \ldots, \mathbf{X}_N\}$ of size $mn$, where $\mathbf{X}_i = \mathbf{U}_i^{\mathsf{T}} \mathbf{V}_i$.
\end{definition}

\begin{definition}
The row-wise Khatri-Rao product of two matrices $\mathbf{P} \in \mathbb{R}^{K \times m}$ and
$\mathbf{Q} \in \mathbb{R}^{K \times n}$ denoted by $\mathbf{P} \odot \mathbf{Q}$
is given by the matrix $\mathbf{M}$ whose $i$th row is the Kronecker product of the $i$th row of $\mathbf{P}$
and the $i$th row of $\mathbf{Q}$, i.e.,
\begin{equation}
    [\mathbf{M}]_{i,:} = [\mathbf{P}_{i,:}] \otimes [\mathbf{Q}_{i,:}]
\end{equation}
where $\otimes$ refers to the Kronecker product.
\end{definition}

\section{Non-Systematically encoded Random Khatri-Rao-Product Codes}
\subsection{Encoding:}
Our proposed non-systematic RKRP codes are encoded as follows. For  $i=1,\ldots, N$, the master node computes
\begin{eqnarray}
    \mathbf{U}_i^{\mathsf{T}}  & = & \sum_{j=1}^m p_{i,j} \mathbf{A}_j^{\mathsf{T}}, \\
    \mathbf{V}_i & = & \sum_{l=1}^n q_{i,l} \mathbf{B}_l
\label{eqn:encoding}
\end{eqnarray}
where $p_{i,j}, q_{i,l}$ are realizations of independent identically distributed random variables $P_{i,j}$ and $Q_{i,l}$, respectively.
Both $P_{i,j}$ and $Q_{i,l}$ are assumed to be continuous random variables with a probability density function $f \  \forall i,j,l$, i.e., their distribution is absolutely continuous with respect to the Lebesgue measure.
$\mathbf{U}_i$ and  $\mathbf{V}_i$ are then transmitted to the $i$th worker node which is tasked with computing
$\mathbf{X}_i = \mathbf{U}_i^{\mathsf{T}} \mathbf{V}_i$.
We first note that $\mathbf{X}_i$ can be written as
\begin{align}\nonumber
    \mathbf{X}_i & = \bigg(\sum_{j=1}^m p_{i,j} \mathbf{A}_j^{\mathsf{T}} \bigg) \bigg( \sum_{l=1}^n q_{i,l} \mathbf{B}_l \bigg)\\
    & = \sum_{j=1}^m \sum_{l=1}^n p_{i,j} q_{i,l} \mathbf{A}_j^{\mathsf{T}} \mathbf{B}_l.
\end{align}

Since the matrix $\mathbf{X}_i$ is a linear combination of the desired matrix products $\mathbf{A}_j^{\mathsf{T}} \mathbf{B}_l$, the $(s,t)$th entry of $\mathbf{X}_i$, namely $[\mathbf{X}_i]_{s,t}$, is a linear combination of the $(s,t)$th entries of the matrix products $\mathbf{A}_j^{\mathsf{T}} \mathbf{B}_l$, namely $[\mathbf{A}_j^{\mathsf{T}} \mathbf{B}_l]_{s,t}$. During the decoding process, we attempt to recover $[\mathbf{A}_j^{\mathsf{T}} \mathbf{B}_l]_{s,t}$ from $[\mathbf{X}_1]_{st}, \ldots, [\mathbf{X}_K]_{s,t}$ for each pair of $s,t$ separately.

To keep the discussions clear, we focus on the recovery of the $(1,1)$th entry of $\mathbf{A}_j^{\mathsf{T}} \mathbf{B}_l$, namely $[\mathbf{A}_j^{\mathsf{T}} \mathbf{B}_l]_{1,1}$.
The same idea extends to the recovery of other indices as well.
Let $y_i = [\mb{X}_i]_{1,1}$ denote the $(1,1)$th entry in the matrix product computed by the $i$th non-straggler worker node, and let $z_{j,l} = [\mathbf{A}_j^{\mathsf{T}} \mathbf{B}_l]_{1,1}$.

The vector of computed values can be written as a linear combination of matrix products given by
\begin{equation}
\label{eqn:matrix1}
    \begin{bmatrix}
    y_1 \\
    y_2 \\
    \vdots\\
    y_i \\
    \vdots \\
    y_N \\
    \end{bmatrix}
    =
    \begin{bmatrix}
    p_{1,1}q_{1,1} & p_{1,1}q_{1,2} & \ldots & p_{1,1}q_{1,n} & \ldots & p_{1,m}q_{1,n}\\
    p_{2,1}q_{2,1} & p_{2,1}q_{2,2} & \ldots & p_{2,1}q_{2,n} & \ldots & p_{2,m}q_{2,n}\\
    & & \vdots & & \\
    p_{i,1}q_{i,1} & p_{i,1}q_{i,2} & \ldots & p_{i,1}q_{i,n} & \ldots & p_{i,m}q_{i,n}\\
    & & \vdots & & \\
    p_{N,1}q_{N,1} & p_{N,1}q_{N,2} & \ldots & p_{N,1}q_{N,n} & \ldots & p_{N,m}q_{N,n}\\
    \end{bmatrix}
    \begin{bmatrix}
    z_{1,1} \\
    z_{1,2} \\
    \vdots\\
    z_{1,n} \\
    \vdots \\
    z_{m,n} \\
    \end{bmatrix}
\end{equation}
It will be more convenient to express \eqref{eqn:matrix1} in a slightly different form.
For $j \in \{1,\ldots,mn\}$, let $j' = \lceil j/n \rceil$ and $j'' = (j-1) \mod n + 1$, and let
$w_j = z_{j',j''}$.
Without loss of generality, let us assume that the worker nodes which return their computation are
worker nodes $1,2, \dots, K$.
The computed values $y_i$'s are related to the unknown values $w_j$'s according to
\begin{equation}
\label{eqn:matrix2}
    \begin{bmatrix}
    y_1 \\
    y_2 \\
    \vdots\\
    y_i \\
    \vdots \\
    y_K \\
    \end{bmatrix}
    =
    \begin{bmatrix}
    p_{1,1}q_{1,1} & p_{1,1}q_{1,2} & \ldots & p_{1,1}q_{1,n} & \ldots & p_{1,m}q_{1,n}\\
    p_{2,1}q_{2,1} & p_{2,1}q_{2,2} & \ldots & p_{2,1}q_{2,n} & \ldots & p_{2,m}q_{2,n}\\
    & & \vdots & & \\
    p_{i,1}q_{i,1} & p_{i,1}q_{i,2} & \ldots & p_{i,1}q_{i,n} & \ldots & p_{i,m}q_{i,n}\\
    & & \vdots & & \\
    p_{K,1}q_{K,1} & p_{K,1}q_{K,2} & \ldots & p_{K,1}q_{K,n} & \ldots & p_{K,m}q_{K,n}\\
    \end{bmatrix}
    \begin{bmatrix}
    w_1 \\
    w_{2} \\
    \vdots\\
    w_{j} \\
    \vdots \\
    w_{K} \\
    \end{bmatrix}
\end{equation}
or, more succinctly as
\begin{equation}\label{eqn:mainequation}
  \underline{y} = \mathbf{G} \ \underline{w}
\end{equation}
where $\underline{y} = [y_1, y_2, \ldots, y_K]^{\mathsf{T}}$,
$\underline{w} = [w_1, w_2, \ldots, w_{mn}]^{\mathsf{T}}$, and
$\mathbf{G}$ is an $N \times mn$ generator matrix for a code
with $[\mathbf{G}]_{i,j} = p_{i,j'} q_{i,j''}$.

Let $\mathbf{P}$ and $\mathbf{Q}$ be two matrices  whose entries are given by
$[\mathbf{P}]_{i,j'} = p_{i,j'}$ and $[\mathbf{Q}]_{i,j''} = q_{i,j''}$.
It can be seen that
\begin{equation}\label{eqn:krprod}
\mathbf{G}  = \mathbf{P} \odot \mathbf{Q},
\end{equation} i.e., $\mathbf{G}$ is the row-wise Khatri-Rao product of two matrices
$\mathbf{P}$ and $\mathbf{Q}$.
Hence, we call these codes as Random Khatri-Rao-Product codes.

\begin{example}\label{ex:example1}
In order to clarify the main idea, consider an example with $m=2$ and $n=3$ and $N>6$. Without loss of generality, assume that the worker nodes $1,2,\ldots,6$ return the results of their computations, namely, $\mathbf{X}_1, \ldots, \mathbf{X}_6$. In this case, the set of computations returned by the worker nodes is related to the matrix products that we need to compute according to
\begin{equation}
    \begin{bmatrix}
    y_1 \\
    y_2 \\
    y_3 \\
    y_4 \\
    y_5 \\
    y_6 \\
    \end{bmatrix}
    =
    \begin{bmatrix}
    p_{1,1}q_{1,1} & p_{1,1}q_{1,2} & p_{1,1}q_{1,3} & p_{1,2}q_{1,1} & p_{1,2}q_{1,2} & p_{1,2}q_{1,3}\\
    p_{2,1}q_{2,1} & p_{2,1}q_{2,2} & p_{2,1}q_{2,3} & p_{2,2}q_{2,1} & p_{2,2}q_{2,2} & p_{2,2}q_{2,3}\\
    p_{3,1}q_{3,1} & p_{3,1}q_{3,2} & p_{3,1}q_{3,3} & p_{3,2}q_{3,1} & p_{3,2}q_{3,2} & p_{3,2}q_{3,3}\\
    p_{4,1}q_{4,1} & p_{4,1}q_{4,2} & p_{4,1}q_{4,3} & p_{4,2}q_{4,1} & p_{4,2}q_{4,2} & p_{4,2}q_{4,3}\\
    p_{5,1}q_{5,1} & p_{5,1}q_{5,2} & p_{5,1}q_{5,3} & p_{5,2}q_{5,1} & p_{5,2}q_{5,2} & p_{5,2}q_{5,3}\\
    p_{6,1}q_{6,1} & p_{6,1}q_{6,2} & p_{6,1}q_{6,3} & p_{6,2}q_{6,1} & p_{6,2}q_{6,2} & p_{6,2}q_{6,3}\\
    \end{bmatrix}
    \begin{bmatrix}
    w_{1} \\
    w_{2} \\
    w_{3} \\
    w_{4} \\
    w_{5} \\
    w_{6} \\
    \end{bmatrix}
\end{equation}
\end{example}

\begin{definition}
\label{def:nonsysensemble}
The ensemble of $N \times K$ generator matrices obtained by choosing the generator matrix $\mb{G}$ as in \eqref{eqn:krprod} where $p_{i,j}, q_{i,j}$ are realizations of random variables $P_{i,j}, Q_{i,j}$
such that $\{P_{1,1},\ldots,P_{N,m},Q_{1,1}, \ldots, Q_{N,n}\}$ is a set of independent random variables
with probability density function $f$
will be referred to as the \emph{non-systematic random Khatri-Rao-product generator matrix ensemble} $\mathcal{G}_{non-sys}(N,K,f)$.
\end{definition}

\subsection{Decoding:}
During decoding, an estimate of $\underline{w}$, namely $\underline{\hat{w}}$, is obtained as follows
\begin{equation}
    \label{eqn:nonsysdecoding}
    \underline{\hat{w}} = \mathbf{G}^{-1} \underline{y}.
\end{equation}

In the absence of numerical round-off errors, if $\mathbf{G}$ is invertible, then $\underline{\hat{w}} = \underline{w}$.
However, when performing computation with finite bits of precision,
there will be numerical errors in the computation. Let $\underline{e} = \underline{w}-\underline{\hat{w}}$ be the error,
and define the relative error as
\begin{equation}
    \label{eqn:fidelity}
    \eta := \frac{||\underline{e}||_2}{||\underline{w}||_2}.
\end{equation}

\section{Non-Systematic RKRP codes are MDS codes with probability 1}
\label{sec:nonsysMDS}
Our first main result in this paper is that if a generator matrix is randomly chosen from the non-systematic RKRP ensemble
$\mathcal{G}_{non-sys}(N,K,f)$, the encoding scheme defined in (\ref{eqn:encoding}) results in an MDS code with probability 1.

\begin{lemma}
\label{lem:new1}
Consider an analytic function $h(\underline{x})$ of several real variables $\underline{x}=[x_1,x_2,\cdots,x_n] \in \mathbbm{R}^n$ . If $h(\underline{x})$ is nontrivial
in the sense that there exists $\underline{x}_0 \in \mathbbm{R}^n$ such that $h(\underline{x}_0) \neq 0$ then the zero set of $h(\underline{x})$,
\[
\mathcal{Z}=\{\underline{x} \in \mathbbm{R}^n\  |\  h(\underline{x})=0\}
\]
is of measure (Lebesgue measure in $\mathbbm{R}^n$) zero.
\end{lemma}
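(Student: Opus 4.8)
The plan is to argue by induction on the number of variables $n$, using Fubini's theorem to peel off one variable at a time and reduce to the one-variable case. For the base case $n=1$: a real analytic function $h$ on $\mathbb{R}$ that is not identically zero has only isolated zeros, since at any zero $x_0$ the convergent Taylor expansion of $h$ about $x_0$ either has a first nonzero coefficient (so $h$ is nonzero in a punctured neighborhood of $x_0$) or has all coefficients zero (so $h\equiv 0$ near $x_0$, hence $h\equiv 0$ on the connected set $\mathbb{R}$ by the identity theorem for real analytic functions, contradicting nontriviality). Thus $\mathcal{Z}$ is at most countable and has Lebesgue measure zero.

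For the inductive step, assume the statement for $n-1$ variables and let $h(x_1,\dots,x_n)$ be real analytic and nontrivial. Write $\underline{x}=(\underline{x}',x_n)$ with $\underline{x}'\in\mathbb{R}^{n-1}$. Since $\mathcal{Z}=h^{-1}(0)$ is closed and hence measurable, Fubini--Tonelli gives
\[
\mathrm{meas}_n(\mathcal{Z}) \;=\; \int_{\mathbb{R}^{n-1}} \mathrm{meas}_1\big(\{\,x_n : h(\underline{x}',x_n)=0\,\}\big)\, d\underline{x}'.
\]
For each fixed $\underline{x}'$, the map $x_n\mapsto h(\underline{x}',x_n)$ is real analytic in one variable, so by the base case its zero set has measure zero unless $h(\underline{x}',\cdot)\equiv 0$. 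Setting $B=\{\underline{x}'\in\mathbb{R}^{n-1} : h(\underline{x}',x_n)=0 \ \forall x_n\}$, it therefore suffices to show $\mathrm{meas}_{n-1}(B)=0$, since then the integrand above vanishes off $B$ and the integral is zero.

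To control $B$, consider the functions $h_k(\underline{x}') := \partial_{x_n}^k h(\underline{x}',0)$ for $k\ge 0$; each is real analytic on $\mathbb{R}^{n-1}$, because partial derivatives and restrictions of real analytic functions are real analytic. If $\underline{x}'\in B$ then the one-variable analytic function $h(\underline{x}',\cdot)$ vanishes identically, hence so do all its derivatives, so $h_k(\underline{x}')=0$ for every $k$; thus $B\subseteq\bigcap_{k\ge 0}\{\underline{x}' : h_k(\underline{x}')=0\}$. Now at least one $h_{k_0}$ is nontrivial on $\mathbb{R}^{n-1}$: if every $h_k\equiv 0$, then for each $\underline{x}'$ the Taylor series of $h(\underline{x}',\cdot)$ about $x_n=0$ has all coefficients zero, so by real analyticity $h$ vanishes on a neighborhood of the hyperplane $\{x_n=0\}$ and hence $h\equiv 0$ on $\mathbb{R}^n$ by the identity theorem, contradicting nontriviality. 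Fixing such a nontrivial $h_{k_0}$, the inductive hypothesis gives $\mathrm{meas}_{n-1}(\{\underline{x}' : h_{k_0}(\underline{x}')=0\})=0$, and since $B$ is contained in this set, $\mathrm{meas}_{n-1}(B)=0$. This closes the induction.

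The step I expect to be the crux is the inductive step, and within it the claim that the ``bad slice'' set $B$ is trapped inside the zero set of a single nontrivial analytic function of $n-1$ variables. That is precisely where the identity theorem for real analytic functions on connected open sets is essential (to pass from ``all vertical Taylor coefficients vanish identically'' to ``$h\equiv 0$''), and where one must be careful that restrictions to affine subspaces and partial differentiation preserve real analyticity. The Fubini reduction and the one-variable base case are routine by comparison.
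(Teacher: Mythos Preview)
Your proof is correct and self-contained. The induction on the number of variables with a Fubini reduction, together with the identity theorem to ensure that the ``bad slice'' set $B$ lies inside the zero set of a single nontrivial real analytic function of $n-1$ variables, is the standard way to establish this result; the only point requiring a moment's care is the measure-theoretic convention that the integral of the (possibly infinite) slice measure over the null set $B$ contributes zero, which is fine under the usual $0\cdot\infty=0$ convention in Lebesgue integration.

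The paper, however, does not actually prove the lemma: it simply cites \cite[Lemma~1]{jiang2001almost}, where the statement is proved over $\mathbb{C}$, and remarks that the same argument works over $\mathbb{R}$. So your write-up is strictly more informative than what appears in the paper. If you want to align with the paper you could replace your argument by the citation; conversely, your version has the advantage of being self-contained and of making explicit where real analyticity (and not merely smoothness) is used, namely in the one-variable base case and in the identity-theorem step that rules out $h_k\equiv 0$ for all $k$.
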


\begin{proof}
This lemma is proved in~\cite[Lemma~1]{jiang2001almost} for the complex field $\mathbb{C}$.
The proof for the real field can be obtained by following the same steps and replacing $\mathbb{C}$ with $\mathbb{R}$.
\end{proof}

\begin{theorem} \label{th:1}
Non-systematic RKRP codes are MDS codes with probability 1.
\end{theorem}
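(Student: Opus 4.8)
The plan is to reduce the MDS property to the nonsingularity of every $mn\times mn$ submatrix of $\mathbf{G}=\mathbf{P}\odot\mathbf{Q}$ (recall $K=mn$), and then to show each such submatrix is nonsingular with probability one via Lemma~\ref{lem:new1}. Concretely, since for each entry index the computations depend on $\underline{w}$ through the linear relation $\underline{y}=\mathbf{G}\,\underline{w}$, recovering all of the products $\{\mathbf{A}_j^{\mathsf T}\mathbf{B}_l\}$ from the computations of an arbitrary set $S\subseteq[N]$ of $mn$ non-straggler workers is possible if and only if the submatrix $[\mathbf{G}]_{S,:}$ is invertible. Hence the code is MDS if and only if $\det[\mathbf{G}]_{S,:}\neq 0$ for every $S$ with $|S|=mn$, and it suffices to prove that this holds with probability one.

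Fix such an $S$. Viewing the $Nm+Nn$ entries $\{p_{i,j}\}\cup\{q_{i,l}\}$ as real variables, the function $h_S=\det[\mathbf{G}]_{S,:}$ is a polynomial, hence analytic, in these variables (in fact it does not depend on the rows outside $S$). The key step is to check that $h_S$ is nontrivial in the sense of Lemma~\ref{lem:new1}. For this I would exhibit one assignment making $[\mathbf{G}]_{S,:}$ invertible: pick $m$ distinct reals $\alpha_1,\dots,\alpha_m$ and $n$ distinct reals $\beta_1,\dots,\beta_n$, enumerate the $mn$ grid pairs $(\alpha_a,\beta_b)$, assign them in any bijection to the rows in $S$ by setting $p_{i,j}=\alpha_a^{\,j-1}$ and $q_{i,l}=\beta_b^{\,l-1}$ for the row $i\in S$ associated with $(a,b)$, and assign the remaining rows arbitrarily. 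Since $[\mathbf{G}]_{i,j}=p_{i,j'}q_{i,j''}$ with $j'=\lceil j/n\rceil$ and $j''=(j-1)\bmod n+1$, ordering the rows of $S$ and the columns both lexicographically by their pair indices turns $[\mathbf{G}]_{S,:}$ into the Kronecker product $\mathbf{V}\otimes\mathbf{W}$, where $\mathbf{V}$ is the $m\times m$ Vandermonde matrix on $\alpha_1,\dots,\alpha_m$ and $\mathbf{W}$ is the $n\times n$ Vandermonde matrix on $\beta_1,\dots,\beta_n$. Then $\det[\mathbf{G}]_{S,:}=(\det\mathbf{V})^{n}(\det\mathbf{W})^{m}\neq 0$ because the $\alpha_a$'s are distinct and the $\beta_b$'s are distinct, so $h_S$ is nontrivial.

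By Lemma~\ref{lem:new1}, the zero set $\mathcal{Z}_S=\{\underline{x}\mid h_S(\underline{x})=0\}$ has Lebesgue measure zero in $\mathbb{R}^{Nm+Nn}$. Since the $P_{i,j}$ and $Q_{i,l}$ are independent with common density $f$, their joint law is absolutely continuous with respect to Lebesgue measure, so $\Pr\big([\mathbf{G}]_{S,:}\text{ singular}\big)=\Pr(\mathcal{Z}_S)=0$. A union bound over the $\binom{N}{mn}$ choices of $S$ then gives $\Pr\big(\exists\,S:\ [\mathbf{G}]_{S,:}\text{ singular}\big)=0$; equivalently, with probability one every $mn\times mn$ submatrix of $\mathbf{G}$ is invertible, i.e. the code is MDS. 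I expect the only nonroutine point to be the nontriviality of $h_S$ — producing a single configuration of the $p$'s and $q$'s for which the relevant submatrix is invertible — and the Vandermonde–tensor construction above handles it; analyticity of the determinant, the measure-zero conclusion, absolute continuity, and the union bound over finitely many subsets are all standard.
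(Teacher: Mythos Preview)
Your proof is correct and follows essentially the same approach as the paper: show the determinant is a nontrivial polynomial in the $p_{i,j},q_{i,l}$ and invoke Lemma~\ref{lem:new1}, then conclude via absolute continuity (the paper handles the subsets implicitly by the exchangeability of rows rather than an explicit union bound). The one difference is your choice of witness for nontriviality: the paper takes the simpler route of setting $p_{i,j'}=q_{i,j''}=1$ and all other entries to $0$ so that the submatrix becomes the identity, whereas your Vandermonde--Kronecker construction, while correct, is more work than needed.
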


\begin{proof}
To prove the theorem, we need to prove that the matrix $\mathbf{G}$ obtained when $K=mn$ in \eqref{eqn:matrix2} is a full rank matrix with probability 1. Let $p_{i,j}$ be a realization of the random variable $P_{i,j}$ and let $q_{i,j}$ be a realization of the random variable $Q_{i,j}$. The generator matrix in \eqref{eqn:matrix2} is a realization of the matrix of random variables $\{P_{i,j}\}_{i\in [K],j\in [m]}$ and $\{Q_{i,j}\}_{i\in [K],j\in [n]}$:
\begin{equation}
\mathbf{\Gamma} =
    \begin{bmatrix}
    P_{1,1}Q_{1,1} & P_{1,1}Q_{1,2} & \ldots & P_{1,1}Q_{1,n} & \ldots & P_{1m}Q_{1,n}\\
    P_{2,1}Q_{2,1} & P_{2,1}Q_{2,2} & \ldots & P_{2,1}Q_{2,n} & \ldots & P_{2m}Q_{2,n}\\
    & & \vdots & & \\
    P_{i,1}Q_{i,1} & P_{i,1}Q_{i2} & \ldots & P_{i,1}Q_{in} & \ldots & P_{i,m}Q_{in}\\
    & & \vdots & & \\
    P_{K,1}Q_{K,1} & P_{K,1}Q_{K,2} & \ldots & P_{K,1}Q_{K,n} & \ldots & P_{K,m}Q_{K,n}\\
    \end{bmatrix}
\end{equation}

We will show that $\Pr (\mathrm{rank}(\mathbf{\Gamma}) \neq mn ) = 0$. The determinant of $\mathbf{\Gamma}$ is a polynomial in the variables $\{P_{i,j}\}_{i\in [K],j\in [m]}$ and $\{Q_{i,j}\}_{i\in [K],j\in [n]}$ with degree $2mn$. Let
\begin{equation}
    \label{eqn:deteq}
\det({\mathbf{\Gamma}}) = h(P_{1,1},\ldots,P_{K,m},Q_{1,1},\ldots,Q_{K,n})
\end{equation}
We first show that there exists at least one $P_{1,1},\ldots,P_{K,m},Q_{1,1},\ldots,Q_{K,n}$ for which \[h(P_{1,1},\ldots,P_{K,m},Q_{1,1},\ldots,Q_{K,n}) \neq 0.\]
For $j \in [mn]$, let $j' = \lceil j/n \rceil$ and $j'' = ((j-1) \mod n) + 1$.
Let $P_{j,j'}=1, Q_{j,j''}=1, \forall j \in [mn]$, $P_{j,l} = 0, \forall j \in [mn], l \neq j'$, and $Q_{j,l} = 0,  \forall j \in [mn], l\neq j''$. For this choice of $P_{1,1},\ldots,P_{K,m},Q_{1,1},\ldots,Q_{K,n}$, it can be seen that the matrix $\mathbf{\Gamma}$ reduces to an identity matrix, and hence $h(P_{1,1},\ldots,P_{K,m},Q_{1,1},\ldots,Q_{K,n})=1$ ($\neq 0$).
From Lemma~\ref{lem:new1}, we then see that the zero set of $h$ has measure zero, and hence, $\Pr(\mathrm{rank}(\mathbf{\Gamma}) \neq mn) = 0$.

\end{proof}

\section{Systematic Khatri-Rao-Product Codes}
In this section, we introduce a systematic construction of random Khatri-Rao-Product codes which reduces the average encoding and decoding complexities compared to its non-systematic counterpart.

\subsection{Encoding:}
In systematic encoding, the first $K$ worker nodes are simply given the submatrices $\mathbf{A}_j^{\mathsf{T}}$ and $\mathbf{B}_l$ without encoding and the other $N-K$ worker nodes are given encoded versions as in the non-systematic version. For $i \in \{1,\ldots,K=mn\}$, let $i' = \lceil i/n \rceil$ and $i'' = ((i-1) \mod n) + 1$. The encoding process can be described as below
\begin{align}
\label{eqn:sysencoding}
&\mathbf{U}_i^{\mathsf{T}}=
\begin{cases}
\mathbf{A}_{i'}^{\mathsf{T}}, & i\in[K],\\
\sum_{j=1}^m p_{i-K,j} \mathbf{A}_j^{\mathsf{T}}, &  K+1\leq i \leq N, \\
\end{cases}\\
&\mathbf{V}_i=
\begin{cases}
\mathbf{B}_{i''}, & i\in[K],\\
\sum_{l=1}^n q_{i-K,l} \mathbf{B}_l, & K+1\leq i \leq N, \\
\end{cases}
\end{align} where $p_{i,j}, q_{i,j}$ are realizations of $P_{i,j}, Q_{i,j}$ which are absolutely continuous random variables with respect to the Lebesgue measure. $\mathbf{U}_i$ and  $\mathbf{V}_i$ are then transmitted to the $i$th worker node which is tasked with computing
$\mathbf{X}_i = \mathbf{U}_i^{\mathsf{T}} \mathbf{V}_i$.
We will refer to worker nodes $1,2,\ldots,K$ as \emph{systematic worker nodes} and
we will refer to worker nodes $K+1,\ldots,N$ as \emph{parity worker nodes}.

As in the case of non-systematic encoding, we focus on the recovery of the $(1,1)$th entry of $\mathbf{A}_j^{\mathsf{T}} \mathbf{B}_l$, namely $[\mathbf{A}_j^{\mathsf{T}} \mathbf{B}_l]_{1,1}$.
The same idea extends to the recovery of other indices as well.
Let $y_i = [X_i]_{1,1}$ denote the $(1,1)$th entry in the matrix product computed by the $i$th non-straggler worker node and let $z_{j,l} = [\mathbf{A}_j^{\mathsf{T}} \mathbf{B}_l]_{1,1}$.
For $j \in [mn]$, let $j' = \lceil j/n \rceil$ and $j'' = ((j-1) \mod n) + 1$ and let
$w_j = z_{j',j''}$.
The computed values $y_i$'s are related to the unknown values $w_j$'s according to

\begin{equation}
\label{eqn:sysmatrix1}\hspace{-0.125cm}
    \begin{bmatrix}
    y_1 \\
    y_2 \\
    \vdots\\
    y_K \\
    y_{K+1} \\
    \vdots \\
    y_N \\
    \end{bmatrix}
    =
    \begin{bmatrix}
    1 & 0 & \ldots & 0 & \ldots & 0 \\
    0 & 1 & \ldots & 0 & \ldots & 0 \\
    & & \vdots & & \\
    0 & 0 & \ldots & 0 & \ldots & 1 \\
    p_{1,1}q_{1,1} & p_{1,1}q_{1,2} & \ldots & p_{1,1}q_{1,n} & \ldots & p_{1m}q_{1,n}\\
    p_{2,1}q_{2,1} & p_{2,1}q_{2,2} & \ldots & p_{2,1}q_{2,n} & \ldots & p_{2m}q_{2,n}\\
    & & \vdots & & \\
    p_{S,1}q_{S,1} & p_{S,1}q_{S,2} & \ldots & p_{S,1}q_{S,n} & \ldots & p_{S,m}q_{S,n}\\
    \end{bmatrix}
    \begin{bmatrix}
    w_1 \\
    w_{2} \\
    \vdots\\
    w_{j} \\
    \vdots \\
    w_{K} \\
    \end{bmatrix},
\end{equation}
where $S=N-K$.

The generator matrix in \eqref{eqn:sysmatrix1} can be written as $[\mathbf{I}_{K \times K} \ \mathbf{F}^{\sf T}]^{\sf T}$ where $\mathbf{F}$ is an $N-K \times K$ matrix whose $i$th row is given by $\underline{p}_i \odot \underline{q}_i$, i.e.,

\begin{equation}
\label{eqn:sysparitymatrix}\hspace{-0.25cm}
\mathbf{F} =
    \begin{bmatrix}
    p_{1,1}q_{1,1} & p_{1,1}q_{1,2} & \ldots & p_{1,1}q_{1,n} & \ldots & p_{1m}q_{1,n}\\
    p_{2,1}q_{2,1} & p_{2,1}q_{2,2} & \ldots & p_{2,1}q_{2,n} & \ldots & p_{2m}q_{2,n}\\
    & & \vdots & & \\
    p_{S,1}q_{S,1} & p_{S,1}q_{S,2} & \ldots & p_{S,1}q_{S,n} & \ldots & p_{S,m}q_{S,n}\\
    \end{bmatrix}.
\end{equation}

\begin{definition}
\label{def:sysensemble}
The ensemble of $N \times K$ generator matrices obtained by choosing the generator matrix $\mb{G}$ as in \eqref{eqn:sysmatrix1} where $P_{i,j}, Q_{i,j} \sim f$ will be referred to as the \emph{systematic random Khatri-Rao-product generator matrix ensemble} $\mathcal{G}_{sys}(N,K,f)$.
\end{definition}

\subsection{Decoding}
\label{sec:sysdecoding}
We consider the case when there are $S_1$ stragglers among the systematic worker nodes and $S_2 = S-S_1$ stragglers among the parity worker nodes.
Without loss of generality we assume that the stragglers are the worker nodes $1,2,\ldots,S_1$ and
$K+S_1+1, \ldots, N$.
This implies that the master nodes obtains $y_{S_1+1}, \ldots, y_K$ and since the encoding is systematic, the master node can trivially recover $w_{S_1+1}, \ldots, w_{K}$ by setting $w_i = y_i$ for $i=S_1+1, \ldots, S_K$.
We can recover $w_1, \ldots, w_{S_1}$ from $y_{K+1}, \ldots, y_{K+S_1}$ as follows. Notice that
\begin{align}
\label{eqn:sysmatrix2}
    &\begin{bmatrix}
    y_{K+1} \\
    y_{K+2} \\
    \vdots \\
    y_{K+S_1} \\
    \end{bmatrix}
    =
    [\mathbf{F}]_{K+1:K+S_1,1:S_1}
    \begin{bmatrix}
    w_1 \\
    w_{2} \\
    \vdots\\
    w_{S_1} \\
    \end{bmatrix}  \\
    &\qquad \qquad\quad +
    [\mathbf{F}]_{K+1:K+S_1,S_1+1:K}
    \begin{bmatrix}
    w_{S_1+1} \\
    w_{S_1+2} \\
    \vdots\\
    w_{K} \\
    \end{bmatrix},
    \end{align} which in turn implies that
    \begin{align}\label{eqn:sysmatrix3}
    &\underbrace{
    \begin{bmatrix}
    y_{K+1} \\
    y_{K+2} \\
    \vdots \\
    y_{K+S_1} \\
    \end{bmatrix}
    -
    [\mathbf{F}]_{K+1:K+S_1,S_1+1:K}
    \begin{bmatrix}
    w_{S_1+1} \\
    w_{S_1+2} \\
    \vdots\\
    w_{K} \\
    \end{bmatrix}}_{\underline{y}} \\
     &\qquad\qquad=
    \underbrace{[\mathbf{F}]_{K+1:K+S_1,1:S_1}}_{\mathbf{G}_{sys}}
    \underbrace{
    \begin{bmatrix}
    w_1 \\
    w_{2} \\
    \vdots\\
    w_{S_1} \\
    \end{bmatrix}}_{\underline{w}}, \nonumber
\end{align} or more succinctly,
\begin{equation}\label{eqn:sysmatrix4}
  \underline{y} = \mathbf{G}_{sys} \ \underline{w}.
\end{equation} We can obtain an estimate of $\underline{w}$, namely $\underline{\hat{w}}$, as
\begin{equation}
\label{eqn:sysinverse}
    \underline{\hat{w}} = \mathbf{G}_{sys}^{-1} \ \underline{y}.
\end{equation}

Note that in the above description it is assumed that the stragglers were worker nodes $1,2,\ldots,S$ and $K+S_1+1, \ldots, N$.
However, the same ideas can be used for arbitrary sets of stragglers.
The following example will clarify this.
\begin{example}
Consider an example with $m=2,n=3, K=6$ with $N=10$ worker nodes.
Let the straggler nodes be the worker nodes 2,4,5, and 8.
In this case, we first recover $w_1,w_3,w_6$ by setting $w_1 = y_1, w_3 = y_3$ and $w_6 = y_6$.
Then, we recover $w_2, w_4,$ and $w_5$ from $w_1,w_3,w_6,y_7,y_9,y_{10}$ using
\begin{equation}
    \begin{bmatrix}
    y_{7} \\
    y_{9} \\
    y_{10} \\
    \end{bmatrix}
    -
    [\mathbf{F}]_{\{7,8,9\},\{1,3,6\}}
    \begin{bmatrix}
    w_{1} \\
    w_{3} \\
    w_{6} \\
    \end{bmatrix}
     =
    [\mathbf{F}]_{\{7,8,9\},\{2,4,5\}}
    \begin{bmatrix}
    w_{2} \\
    w_{4} \\
    w_{5} \\
    \end{bmatrix}.
\end{equation}
\end{example}

We show that if a generator matrix is chosen at random from the systematic RKRP ensemble $\mathcal{G}_{sys}(N,K,f)$, the encoding scheme in \eqref{eqn:sysmatrix1} result in an MDS code with probability 1.

\begin{theorem}
Systematic RKRP codes are MDS codes with probability 1.
\end{theorem}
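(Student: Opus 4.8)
The plan is to follow the proof of Theorem~\ref{th:1} almost verbatim, after first peeling off the systematic part. The generator matrix is $[\mathbf{I}_{K\times K}\ \mathbf{F}^{\sf T}]^{\sf T}$. Consider any $K=mn$ of its rows: some are systematic rows (rows of $\mathbf{I}_{K\times K}$, indexed by a set $\mathcal{I}_{\mathrm{s}}\subseteq[K]$) and the remaining $S_1=K-|\mathcal{I}_{\mathrm{s}}|$ are parity rows (rows of $\mathbf{F}$, say rows $r_1,\ldots,r_{S_1}$). Put $\mathcal{U}=[K]\setminus\mathcal{I}_{\mathrm{s}}$, so $|\mathcal{U}|=S_1$. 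Expanding the determinant of this $K\times K$ submatrix along the systematic rows, each of which is a standard basis vector, reduces it up to sign to $\det[\mathbf{F}]_{\{r_1,\ldots,r_{S_1}\},\mathcal{U}}$ --- exactly the quantity that governs decodability in \eqref{eqn:sysmatrix3}. Since there are only finitely many choices of $(\mathcal{I}_{\mathrm{s}},r_1,\ldots,r_{S_1})$, by the union bound it suffices to show, for each fixed choice, that $\Pr\big(\det[\mathbf{F}]_{\{r_1,\ldots,r_{S_1}\},\mathcal{U}}=0\big)=0$.

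Fix such a choice and write $\mathcal{U}=\{j_1,\ldots,j_{S_1}\}$. As in \eqref{eqn:deteq}, $\det[\mathbf{F}]_{\{r_1,\ldots,r_{S_1}\},\mathcal{U}}$ is a polynomial, hence an analytic function, in the finitely many independent variables $\{P_{r_k,j},Q_{r_k,l}\}_{k,j,l}$ that generate the chosen parity rows. By Lemma~\ref{lem:new1} it suffices to exhibit one assignment of these variables for which the determinant is nonzero, and then to invoke absolute continuity of $f$ to pass from ``the zero set has Lebesgue measure zero'' to ``the event has probability zero.'' For the assignment I would make the $k$-th chosen parity row act as a selector for the column $j_k$: recall that the entry of $\mathbf{F}$ in column $j$ equals $p_{\cdot,j'}q_{\cdot,j''}$ where $j'=\lceil j/n\rceil$ and $j''=((j-1)\bmod n)+1$, and that distinct columns of $\mathbf{F}$ correspond to distinct pairs $(j',j'')$. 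Then set $p_{r_k,j_k'}=1$ and $q_{r_k,j_k''}=1$, and for every other $j_\ell\in\mathcal{U}$ ($\ell\neq k$) set $p_{r_k,j_\ell'}=0$ if $j_\ell'\neq j_k'$ and otherwise set $q_{r_k,j_\ell''}=0$, which is consistent because $j_\ell'=j_k'$ forces $j_\ell''\neq j_k''$; set all remaining variables to $0$. Under this assignment the $k$-th row of $[\mathbf{F}]_{\{r_1,\ldots,r_{S_1}\},\mathcal{U}}$ is the $k$-th standard basis vector, so the submatrix is the identity and its determinant is $1$.

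Assembling the pieces: for each choice the zero set of $\det[\mathbf{F}]_{\{r_1,\ldots,r_{S_1}\},\mathcal{U}}$ has Lebesgue measure zero by Lemma~\ref{lem:new1}, hence probability zero since the generating variables are independent with density $f$; the union bound over the finitely many such submatrices then shows that every $K\times K$ submatrix of the generator matrix is invertible with probability $1$, i.e., the code is MDS with probability $1$. The main thing to get right is the explicit assignment in the second step: one has to check that, within a single parity row, the instructions ``keep column $j_k$'' and ``kill columns $j_\ell$ for $\ell\neq k$'' never collide, and this is precisely where the bijection between columns of $\mathbf{F}$ and pairs $(j',j'')$ is used --- it is the only point at which the Khatri-Rao (rather than fully general) structure of $\mathbf{F}$ enters. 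Everything else is a transcription of the argument for Theorem~\ref{th:1}.
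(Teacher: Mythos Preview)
Your proof is correct and follows exactly the approach the paper intends: the paper's own proof simply says ``follows along the same lines as the proof of Theorem~\ref{th:1},'' and what you have written is precisely that argument carried out in detail --- reduce the $K\times K$ determinant to $\det[\mathbf{F}]_{\{r_1,\ldots,r_{S_1}\},\mathcal{U}}$ by Laplace expansion along the systematic rows, exhibit an assignment of the $P$'s and $Q$'s making this submatrix the identity (your consistency check via the bijection $j\leftrightarrow(j',j'')$ is exactly the right point), apply Lemma~\ref{lem:new1}, and finish with the union bound. If anything, you have supplied more care than the paper does.
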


\begin{proof}
To prove the theorem, we need to prove that
$\mb{G}_{sys}$ is full rank with probability 1.
The proof follows along the same lines as the proof of Theorem~\ref{th:1}.
\end{proof}

\section{Review of OrthoPoly codes}\label{sec:OrthoPoly}
In this section, we will briefly review OrthoPoly codes for the sake of completeness. Details can be found in~\cite{fahim2019}. The encoding scheme consists of the master node dividing the matrices $\mb{A}$ and $\mb{B}$ as in Section~\ref{sm&p} and subsequently computing \[\mb{u}_i^{\sf T}=\sum\limits_{j=1}^{m-1} T_j(x_i) \mb{A}_j^{\sf T},\quad  \mb{v}_i=\sum_{j=1}^{n-1} T_{jm}(x_i) \mb{B}_j\] where $T_r(x) = \cos(r(\cos^{-1}(x)))$ and $x_i = \cos(\frac{(2i-1)\pi}{2N})$, and sending $\mb{u}_i^{\sf T}$ and $\mb{v}_i$ to the $i$th worker node.
The $i$th worker node computes $\mb{u}_i^{\sf T} \mb{v}_i$ and sends the result back the master node for decoding.
Let us assume that worker nodes $i=1,\ldots,K$ ($K$ is defined as $K \triangleq mn$) return their outputs.
As before, we focus on the recovery of $[\mb{A}_i^{\sf T} \mb{B}_j]_{1,1}$.
If $y_i = [\mb{u}_i^{\sf T} \mb{v}_i]_{1,1}$, then
\begin{equation}
\label{eqn:OrthoPolyencoding}
    \underbrace{
    \begin{bmatrix}
    y_1 \\
    y_2 \\
    \vdots\\
    y_i \\
    \vdots \\
    y_K \\
    \end{bmatrix}}_{\underline{y}}
    =
    \underbrace{
    \begin{bmatrix}
    T_0(x_1)&\cdots&T_{K-1}(x_1) \\
    \vdots& \ddots &\vdots \\
    T_{0}(x_K) & \cdots & T_{K-1}(x_K)
    \end{bmatrix}
    }_{\mathbf{G}_{O}}
    \mb{H}
    \underbrace{
    \begin{bmatrix}
    w_1 \\
    w_{2} \\
    \vdots\\
    w_{j} \\
    \vdots \\
    w_{K} \\
    \end{bmatrix}}_{\underline{w}}
\end{equation} where $\mb{H}$ is a $K\times K$ matrix such that
\[
\mb{H}_{(r,(i-1)+(j-1)m+1)}=
\begin{cases}
1, &  r=(i-1)+(j-1)m+1,\\ & i=1, j \in [n]\\
\frac{1}{2}, &  r=(i-1)+(j-1)m +1, \\ & i\neq1, i\in [m], j \in [n]\\
\frac{1}{2}, &   r=|(i-1)-(j-1)m|+1,  \\ & i\neq1, i\in [m], j \in [n]\\
0, & \text{otherwise.}
\end{cases}
\]
An estimate of $\underline{w}$ ( $\underline{w}_{i}=[\mb{A}_j \mb{B}_l]_{1,1}$ such that $i=r+lm +1$, for $0\leq r \leq m-1$ and $0 \leq l \leq n-1$) is then obtained according to
\begin{equation}
    \label{eqn:OrthoPolydecoding}
    \hat{\underline{w}} = \mathbf{H}^{-1} \mathbf{G}_O^{-1} \underline{y}.
\end{equation}

\section{Decoding Complexities}
In this section, we briefly discuss the decoding complexity of systematic RKRP codes and OrthoPoly codes. Decoding of systematic RKRP codes involves two steps. It involves inversion of the $S_1 \times S_1$ matrix
$\mathbf{G}_{sys}$ in \eqref{eqn:sysmatrix4} whose complexity is $O(S_1^3)$. To retrieve every entry of $[\mathbf{A}_j^{\sf T} \mathbf{B}_l]$, we need to multiply $\mathbf{G}_{sys}^{-1}$ and $\underline{y}$ which requires $O(S_1^2)$ operations. This step needs to be repeated for
each of the $\frac{N_1N_3}{mn}$ entries of $[\mathbf{A}_j^{\sf T} \mathbf{B}_l]$ and hence,
the overall decoding complexity is $O(S_1^3 + S_1^2 \frac{N_1 N_3}{mn})$.

OrthoPoly codes cannot be easily implemented in systematic form because of the multiplication by $\mathbf{H}$. Hence, the decoding complexity of OrthoPoly codes involves inverting a $K \times K$ matrix followed by $\frac{N_1N_3}{mn}$ multiplication of a $K \times K$ matrix and a $K \times 1$ vector. The overall complexity is hence $O(K^3 + K^2 \frac{N_1 N_3}{mn})$.

Since $S_1 \leq K$, the average decoding complexity for systematic RKRP codes is lower than that of OrthoPoly codes and the worst-case complexities (when $S_1 = K$) are identical.

\section{Simulation results}
We now present simulation results to demonstrate the superior numerical stability of RKRP codes.
We performed Monte Carlo simulations of the encoding and decoding process by choosing
the entries of $\mb{A}$ and $\mb{B}$ to be realizations of i.i.d Gaussian random variables with zero mean and unit variance.
For the presented results, we have considered the recovery of the $(1,1)$th entry of $\mb{A}^T_j \mb{B}_l$.
The corresponding vector $\underline{w}$ is then a realization of the vector-valued random variable $\underline{W}$.
Then, we computed $\underline{y}$ using \eqref{eqn:mainequation},  \eqref{eqn:sysmatrix1}, and \eqref{eqn:OrthoPolyencoding} for non-systematic RKRP codes, systematic RKRP codes, and OrthoPoly codes, respectively.
We randomly chose a subset of $N-K$ worker nodes to be stragglers.
Then, we computed $\underline{\hat{w}}$ using \eqref{eqn:nonsysdecoding}, \eqref{eqn:sysinverse}, \eqref{eqn:OrthoPolydecoding} for non-systematic RKRP codes, systematic RKRP codes, and OrthoPoly codes, respectively. For each of these codes, we define the average relative error to be
\[
\eta_{\mathrm{ave}} : = \mathbb{E} \left[\frac{||\underline{W}-\underline{\hat{W}}||_2}{||\underline{W}||_2}\right]
\]
and we estimate $\eta_{\mathrm{ave}}$ from Monte Carlo simulations.

\subsection{MDS property}
Firstly, in several million simulations, we never observed any instance where the generator matrix $\mathbf{G}$ in \eqref{eqn:mainequation} or $\mathbf{G}_{sys}$ in \eqref{eqn:sysmatrix3} was singular, which provides empirical evidence to our claim that non-systematic and systematic RKRP codes are MDS codes with probability 1.

\subsection{Average relative error}
In Fig.~\ref{fig:averageerrorvsK}, we plot the average relative error as a function of $N$ when the total number of worker nodes is set to be $N=\myceil{K/(1-\alpha)}$ or $K= \myfloor{N(1-\alpha)}$.
This model is meaningful when we consider practical scenarios where each worker node fails with a fixed probability.
In the plots in Fig.~\ref{fig:averageerrorvsK}, $\alpha$ is fixed and $K$ and $N$ are varied.
The results are shown for $\alpha = 0.1$ and for OrthoPoly codes, non-systematic RKRP codes, and systematic RKRP codes.
It can be seen that the average relative error is several orders of magnitude lower for RKRP codes when $N$ is about 100.

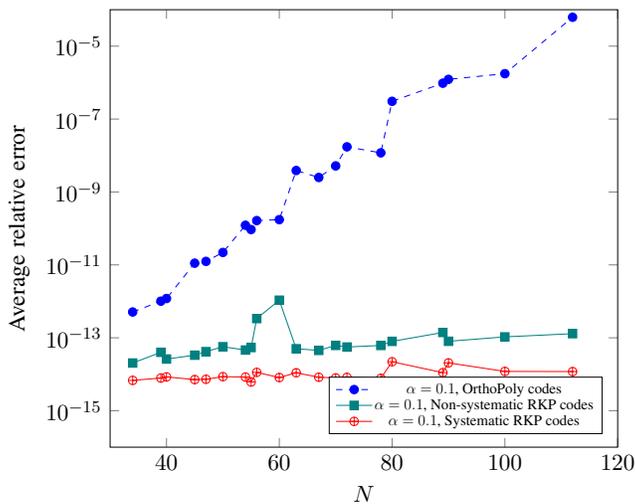
\begin{figure}[h]
\begin{center}
\begin{tikzpicture}[scale=0.8]
\begin{axis}[
scale only axis,
xmin=30, xmax=120, ymin=1e-16, ymax=1e-4, ymode=log,yminorticks=true,
xlabel=$N$, ylabel= Average relative error,legend pos=south east,legend style={nodes={scale=0.5, transform shape}}]

\addplot [color=blue,dashed,mark=*,mark options={solid}]
table[row sep=crcr]{%
34.0 5.103454736425935e-13\\
39.0 1.0105839931875792e-12\\
40.0 1.1950005358856895e-12\\
45.0 1.1136037419570017e-11\\
47.0 1.2480079463064443e-11\\
50.0 2.1938197023992022e-11\\
54.0 1.2206563421612598e-10\\
55.0 9.32145871881467e-11\\
56.0 1.6420278781541255e-10\\
60.0 1.7424949325257787e-10\\
63.0 3.884632503467416e-09\\
67.0 2.4999930951557986e-09\\
70.0 5.197028976938642e-09\\
72.0 1.731575102483833e-08\\
78.0 1.185243969625167e-08\\
80.0 3.0717108938515027e-07\\
89.0 9.629903986886668e-07\\
90.0 1.2283410011641096e-06\\
100.0 1.7566534775152742e-06\\
112.0 6.204123316249028e-05\\
};
\addlegendentry{\Large{$\alpha=0.1$, OrthoPoly codes}}
\addplot [color=teal,solid,mark=square*,mark options={solid}]
table[row sep=crcr]{%
34.0 2.034331651402009e-14\\
39.0 3.996795985875377e-14\\
40.0 2.6155781029681884e-14\\
45.0 3.334232859382861e-14\\
47.0 4.173529172584836e-14\\
50.0 5.696572674116812e-14\\
54.0 4.615855562893162e-14\\
55.0 5.405868449434661e-14\\
56.0 3.366922481304688e-13\\
60.0 1.074276547400439e-12\\
63.0 5.009868179163565e-14\\
67.0 4.5164474146039935e-14\\
70.0 6.192583652058885e-14\\
72.0 5.586775263356438e-14\\
78.0 6.171339666413338e-14\\
80.0 7.98421450617148e-14\\
89.0 1.3995028426677413e-13\\
90.0 8.024072147743728e-14\\
100.0 1.0598751560396825e-13\\
112.0 1.2979604284338983e-13\\
};
\addlegendentry{\Large{$\alpha=0.1$, Non-systematic RKP codes}}

\addplot [color=red,solid,mark=oplus,mark options={solid}]
table[row sep=crcr]{%
34 6.78586524823597e-15\\
39 7.877686121234068e-15\\
40 8.436583663483217e-15\\
45 7.140386709349934e-15\\
47 7.27955078157888e-15\\
50 8.588979091619948e-15\\
54 8.369709918013964e-15\\
55 6.134754123493365e-15\\
56 1.1257919466258076e-14\\
60 8.130254765262981e-15\\
63 1.0979158737891854e-14\\
67 8.278075236931348e-15\\
70 7.925906839573012e-15\\
72 8.229616850470538e-15\\
78 7.73192555982936e-15\\
80 2.2042799283907207e-14\\
89 1.1074083558778613e-14\\
90 2.0528029545865645e-14\\
100 1.1967144194625705e-14\\
112 1.1838635145219738e-14\\
};
\addlegendentry{\Large{$\alpha=0.1$, Systematic RKP codes}}

\end{axis}
\end{tikzpicture}
\end{center}
\caption{Plot of average relative error as a function of $N$ for a fixed $\alpha$; $N=\myceil{K/(1-\alpha)}$ } 
\label{fig:averageerrorvsK}
\end{figure}

In Figure~\ref{fig:averageerrorvsalpha}, we plot the average relative error versus $\alpha=\frac{N-K}{N}$ for a fixed $K$. Again, it can be seen that the proposed RKRP codes are very robust to numerical precision errors and substantially outperform OrthoPoly codes.
It should also be noted that the average relative error remains largely independent of $\alpha$ for RKRP codes whereas they grow rapidly with $\alpha$ for OrthoPoly codes.

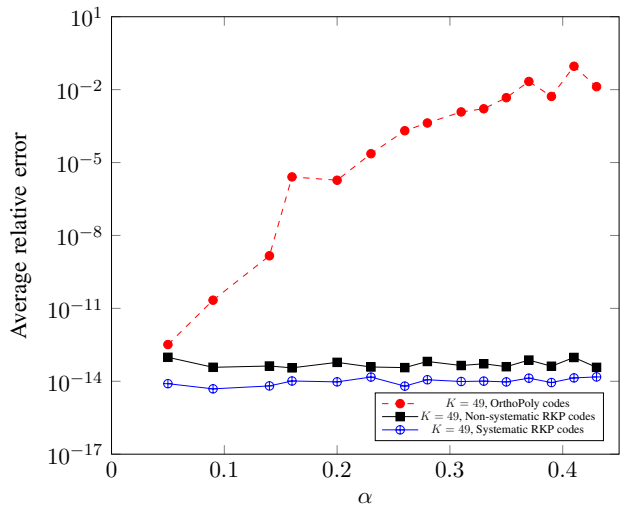
\begin{figure}[h]
\begin{center}
\begin{tikzpicture}[scale=0.8]
\begin{axis}[
scale only axis,
xmin=0, xmax=0.45, ymin=1e-17, ymax=1e+1, ymode=log,yminorticks=true,
xlabel= $\alpha$, ylabel= Average relative error ,legend pos=south east,legend style={nodes={scale=0.4, transform shape}}]

\addplot[color=red,dashed,mark=*,mark options={solid}]
table[row sep=crcr]{%
0.05 3.2403679482311323e-13\\
0.09 2.168526548520895e-11\\
0.14 1.4590511198845883e-09\\
0.16 2.5654761622466327e-06\\
0.20 1.8721826577801001e-06\\
0.23 2.3058298726250162e-05\\
0.26 0.00020400719099178546\\
0.28 0.0004239365417797338\\
0.31 0.0012103776590708915\\
0.33 0.0016291880832280787\\
0.35 0.004659355553004825\\
0.37 0.021655174667095083\\
0.39 0.005252098663182652\\
0.41 0.09168624735976076\\
0.43 0.013252551559512483\\
};
\addlegendentry{\Large{$K=49$, OrthoPoly codes}}
\addplot [color=black,solid,mark=square*,mark options={solid}]
table[row sep=crcr]{%
0.05 9.687014974033349e-14\\
0.09 3.757026106249502e-14\\
0.14 4.2447549898647975e-14\\
0.16 3.569059984966105e-14\\
0.20 6.001696493893898e-14\\
0.23 3.87050039335863e-14\\
0.26 3.6494681010075174e-14\\
0.28 6.526078050308505e-14\\
0.31 4.452487349537499e-14\\
0.33 5.2530635823973e-14\\
0.35 3.9766090519838594e-14\\
0.37 7.447579182968406e-14\\
0.39 4.1533903481105994e-14\\
0.41 9.542700579520118e-14\\
0.43 3.73693958524399e-14\\
};
\addlegendentry{\Large{$K=49$, Non-systematic RKP codes}}

\addplot [color=blue,solid,mark=oplus,mark options={solid}]
table[row sep=crcr]{%
0.05 7.962422360410414e-15\\
0.09 4.878298286498388e-15\\
0.14 6.438352583996073e-15\\
0.16 1.0263255386761733e-14\\
0.20 9.445206441243453e-15\\
0.23 1.4915571889782106e-14\\
0.26 6.321839290129278e-15\\
0.28 1.1538064635208609e-14\\
0.31 9.783359085435483e-15\\
0.33 1.0155074278683003e-14\\
0.35 9.41733020401526e-15\\
0.37 1.3437868940550588e-14\\
0.39 8.867724328873974e-15\\
0.41 1.3757915616201344e-14\\
0.43 1.5110463946165125e-14\\
};
\addlegendentry{\Large{$K=49$, Systematic RKP codes}}

\end{axis}
\end{tikzpicture}
\end{center}
\caption{Plot of average relative error versus fraction of straggler nodes ($\alpha$) for $K=49$; $\alpha=\frac{N-K}{N}$} 
\label{fig:averageerrorvsalpha}
\end{figure}

In Figure~\ref{fig:averageerrorvsS}, we plot the average relative error versus the number of straggler nodes $S$ for a fixed $K$ when $N = K + S$.
It can be seen that the proposed RKRP codes provide excellent robustness even as the number of stragglers increases.

\begin{figure}[h]
\begin{center}
\begin{tikzpicture}[scale=0.8]
\begin{axis}[
scale only axis,
xmin=0, xmax=28, ymin=1e-17, ymax=1e+1, ymode=log,yminorticks=true,
xlabel= Number of stragglers $S$ , ylabel= Average relative error ,legend pos=south east,legend style={nodes={scale=0.5, transform shape}}]
\addplot[color=red,dashed,mark=*,mark options={solid}]
table[row sep=crcr]{%
0 2.1972896322847754e-15\\
1 4.670500112342729e-15\\
2 3.5838124604631344e-14\\
3 3.3549615433598013e-13\\
4 3.1547465190495988e-12\\
5 1.8675423457975537e-11\\
6 4.2290706416321664e-10\\
7 5.371106439387565e-10\\
8 8.830348060046242e-09\\
9 1.2415987938191685e-08\\
10 5.4673991063796613e-08\\
11 2.4080757455940135e-07\\
12 1.5753990089404418e-06\\
13 2.812510527290255e-05\\
14 9.135844691831619e-06\\
15 7.440879992221428e-06\\
16 1.0679765470316333e-05\\
17 3.571445857162388e-05\\
18 0.00022725126829409916\\
19 0.000256306256079855\\
20 0.0002161793616319633\\
21 0.00040518743376102135\\
22 0.0002820353747702039\\
23 0.0012750089638207728\\
24 0.0017895920324810376\\
25 0.0063289549505900575\\
26 0.0060764365534497445\\
};
\addlegendentry{\Large{$K=49$, OrthoPoly codes}}

\addplot [color=black,solid,mark=square*,mark options={solid}]
table[row sep=crcr]{%
0 5.325408459627166e-14\\
1 3.316627337819264e-14\\
2 5.905620456580014e-14\\
3 1.4809624387091834e-13\\
4 5.455996967666671e-13\\
5 4.021229829302564e-14\\
6 4.050014350117992e-14\\
7 2.8195031970416047e-14\\
8 3.6662767216853286e-14\\
9 2.9767967443457914e-14\\
10 5.95002955007048e-14\\
11 2.8438926897403673e-14\\
12 4.1667435274129204e-14\\
13 4.769134137170719e-14\\
14 4.296110849488044e-14\\
15 3.044576519236672e-14\\
16 4.2166693641631226e-14\\
17 2.682788126504167e-14\\
18 4.504401883439918e-14\\
19 3.4710085894538557e-14\\
20 2.8383399563490676e-14\\
21 1.496582313655205e-13\\
22 4.102419323282304e-14\\
23 2.8248821016811223e-14\\
24 2.594864987861034e-14\\
25 3.2869108961003424e-14\\
26 4.30712336365571e-14\\
};
\addlegendentry{\Large{$K=49$, Non-systematic RKP codes}}

\addplot [color=blue,solid,mark=oplus,mark options={solid}]
table[row sep=crcr]{%
1 1.0030893030654464e-14\\
2 1.0939989040455154e-14\\
3 2.115775158514383e-14\\
4 6.841938598055812e-15\\
5 6.896289930137738e-15\\
6 9.520264914244478e-15\\
7 6.760761853091523e-15\\
8 7.191562766135915e-15\\
9 7.706845401887253e-15\\
10 2.080614298922878e-14\\
11 8.356230329016362e-15\\
12 9.746140672550001e-15\\
13 8.301308604923536e-15\\
14 1.2865752568167117e-14\\
15 9.545546774568183e-15\\
16 1.0169588703620527e-14\\
17 1.1099802026771581e-14\\
18 1.0163277424383382e-14\\
19 3.805223183684097e-14\\
20 9.316945746015226e-15\\
21 2.5056168059958586e-14\\
22 1.5872889038161014e-14\\
23 1.303282347017567e-14\\
24 9.951940809940353e-15\\
25 1.1662615679774973e-14\\
26 1.137852047272609e-14\\
};
\addlegendentry{\Large{$K=49$, Systematic RKP codes}}

\end{axis}
\end{tikzpicture}
\end{center}
\caption{Plot of average relative error versus number of stragglers ($S$) for $K=49$; $N=K+S$} 
\label{fig:averageerrorvsS}
\end{figure}
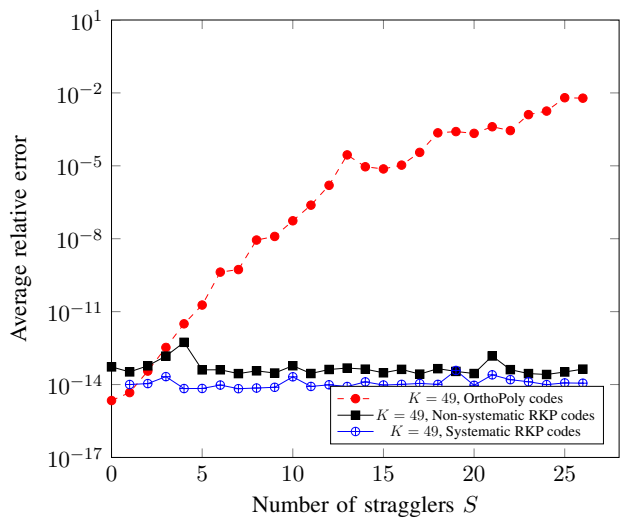

\subsection{Average log condition number}
The expected value of the logarithm of the condition number of a random matrix is a measure of loss in precision in computing the inverse of the determinant of the matrix, when the matrix is chosen from an underlying ensemble \cite{edelman1988eigenvalues}.
We computed the expected value of the logarithm of the condition number of
matrices from three ensembles. For non-systematic RKRP codes, we chose $\mathbf{G}$ from the $\mathcal{G}_{non-sys}(N,K,f)$ ensemble where $f$ is a Gaussian density with zero mean and unit variance.  For systematic RKRP codes, we chose $\mathbf{G}_{sys}$ from the $\mathcal{G}_{sys}(N,K,f)$ ensemble, and for OrthoPoly codes, we randomly chose $K \times K$ submatrices of $\mathbf{G}_{O}$ and multiplied the matrix by $\mathbf{H}$.
In Figure~\ref{fig:cnum}, we plot the average of the log of the condition number as a function of $\alpha$ for the three ensembles. We fix $K$ and let $N = K(1+\alpha)$.

It can be seen that the average of the log of the condition number is substantially lower for RKRP codes than for Orthopoly codes showing that the number of bits of precision lost is substantially lower for RKRP codes.

\begin{figure}[h]
\begin{center}
\begin{tikzpicture}[scale=0.8]
\begin{axis}[
scale only axis,
xmin=0, xmax=1, ymin=0, ymax=27,yminorticks=true,
xlabel= $\alpha$, ylabel= Average log(condition number) ,legend pos=south east,legend style={nodes={scale=0.4, transform shape}}]
\addplot[color=red,dashed,mark=*,mark options={solid}]
table[row sep=crcr]{%
0.0 2.547445130680347\\
0.057692307692307696 4.689861631516333\\
0.10909090909090909 6.710496193752571\\
0.15517241379310345 8.29151377087382\\
0.20967741935483872 9.964581071503398\\
0.25757575757575757 11.29264479267757\\
0.3 12.433174686869677\\
0.35526315789473684 13.76759514620506\\
0.4024390243902439 14.877592891423127\\
0.45555555555555555 16.07697895467572\\
0.5 17.059707664191183\\
0.5504587155963303 18.04892484630569\\
0.6016260162601627 19.112683069893446\\
0.65 20.0333693412482\\
0.7012195121951219 21.060702878271638\\
0.75 21.945721742233424\\
0.8008130081300813 22.935366456687042\\
0.8501529051987767 23.84851394488121\\
0.90020366598778 24.828563621826017\\
0.9500509683995922 25.840812135251667\\
};
\addlegendentry{\Large{$K=49$, OrthoPoly codes}}

\addplot [color=black,solid,mark=square*,mark options={solid}]
table[row sep=crcr]{%
0.0 5.921140933205746\\
0.057692307692307696 5.9186164746858445\\
0.10909090909090909 5.924071515202911\\
0.15517241379310345 5.918383902230549\\
0.20967741935483872 5.922510143895487\\
0.25757575757575757 5.918815339067264\\
0.3 5.919946295437675\\
0.35526315789473684 5.921771231797895\\
0.4024390243902439 5.920147601080497\\
0.45555555555555555 5.9185454666807935\\
0.5 5.92049975142468\\
0.5504587155963303 5.9219090486127985\\
0.6016260162601627 5.924555149875705\\
0.65 5.926503108255872\\
0.7012195121951219 5.930012223374312\\
0.75 5.922250725907986\\
0.8008130081300813 5.92250290251669\\
0.8501529051987767 5.922127690076054\\
0.90020366598778 5.924002945491798\\
0.9500509683995922 5.927227901612172\\
};
\addlegendentry{\Large{$K=49$, Non-systematic RKP codes}}

\addplot [color=blue,solid,mark=oplus,mark options={solid}]
table[row sep=crcr]{%
0.0 0.0\\
0.057692307692307696 2.6392646664149018\\
0.10909090909090909 3.5603614617122283\\
0.15517241379310345 3.9644763370797778\\
0.20967741935483872 4.293480646810595\\
0.25757575757575757 4.50569490271184\\
0.3 4.669198624763689\\
0.35526315789473684 4.835108073759988\\
0.4024390243902439 4.974816977442566\\
0.45555555555555555 5.091975207542352\\
0.5 5.191853298872646\\
0.5504587155963303 5.287568082047964\\
0.6016260162601627 5.377710151649313\\
0.65 5.4574855213789935\\
0.7012195121951219 5.54809874836106\\
0.75 5.61968740186733\\
0.8008130081300813 5.6841662303633615\\
0.8501529051987767 5.746394422889976\\
0.90020366598778 5.802789960723121\\
0.9500509683995922 5.868604082685714\\
};
\addlegendentry{\Large{$K=49$, Systematic RKP codes}}

\end{axis}
\end{tikzpicture}
\end{center}
\caption{Plot of E[log(condition number)] of inverted matrix versus fraction of straggler nodes ($\alpha$) for $K=49$} 
\label{fig:cnum}
\end{figure}
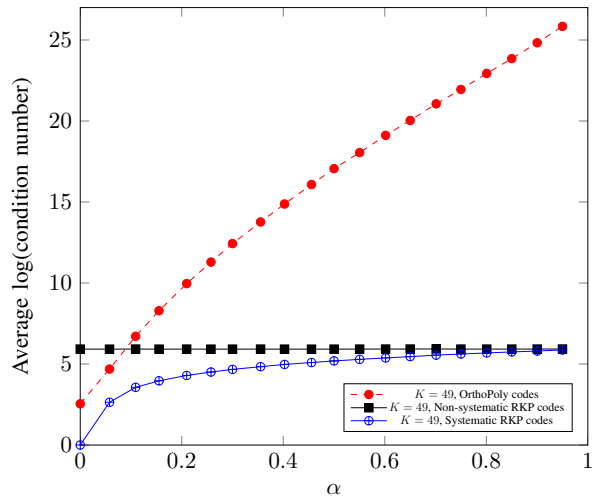

\section{Conclusion}
We proposed a new class of codes called random Khatri-Rao-product (RKRP) codes for which the generator matrix is the row-wise Khatri-Rao product of two random matrices. We proposed two random ensembles of generator matrices and corresponding codes called non-systematic RKRP codes and systematic RKRP codes. We showed that RKRP codes are maximum distance separable with probability 1 and that their decoding is substantially more numerically stable than Polynomial codes and OrthoPoly codes. The average decoding complexity of RKRP codes is lower than that of OrthoPoly codes.

\bibliographystyle{IEEEtran}
\bibliography{IEEEabrv,collab}
\end{document}